\documentclass[submission, nomarks]{dmtcs-episciences}

\usepackage[utf8]{inputenc}
\usepackage{subfigure}
\usepackage{amsmath}
\usepackage[round]{natbib}
\newtheorem{theorem}{Theorem}
\newtheorem{lemma}{Lemma}
\newtheorem{definition}{Definition}
\newtheorem{corollary}{Corollary}
\author[Amanda Redlich]{Amanda Redlich\affiliationmark{1}\thanks{This material is based in part upon work supported by the National Science Foundation under Grant No. DMS-0931908 while the author was in residence at the Institute for Computational and Experimental Research in Mathematics in Providence, RI.}
}
\title{Power laws and power-of-two-choices}

\affiliation{University of Massachusetts Lowell, USA}
\keywords{Allocation algorithm, power law, power of two choices, preferential attachment, randomized algorithm}

\begin{document}

\maketitle

\begin{abstract}
  This paper analyzes a variation on the well-known ``power of two choices" allocation algorithms.  Classically, the \emph{smallest} of $d$ randomly-chosen options is selected.  We investigate what happens when the \emph{largest} of $d$ randomly-chosen options is selected.  This process generates a power-law-like distribution: the $i^{th}$-smallest value scales with $i^{d-1}$, where $d$ is the number of randomly-chosen options, with high probability.  We give a formula for the expectation and show the distribution is concentrated around the expectation.  
\end{abstract}

\section{Introduction}
The basic framework of allocation algorithms is placing $m$ balls into $n$ bins, one at a time.  The idea of using the ``power of two choices" to develop a balanced allocation algorithm was introduced in \cite{azar}.  There, to place each ball, a random set of options (bins) was generated.  The ball was placed in the least-loaded option generated.

This original algorithm has given rise to a broad range of variations.  These include considering the heavily-loaded case (\cite{heavy}), the case with restrictions on feasible options (\cite{graphs, hypergraphs}), and selecting a dynamic number of options (\cite{us, us2}).  In each of these variations, the algorithm is designed to create a balanced allocation.

Here, consider a different variation, UNFAIR.  Instead of placing balls in the least-full bin, place balls in the most-full option.  Previous work \cite{me} introduced this algorithm and began its analysis.  It was shown that the algorithm does indeed generate a biased distribution, where a few bins contain most of the balls.

The main result of this paper is to show the distribution created by UNFAIR resembles a power law; the load of the $i^{th}$ bin scales with $i^{d-1}$.  Note that this is slightly different from the classical power laws, in which the \emph{probability} of a particular size $k$ scales with $k^c$.

Classical power laws are often studied in the context of preferential attachment models.  In preferential attachment, the probability of increase (``bin receiving a ball") is a function of the current size (``bin load").  In our case, the probability of increase depends on the \emph{relative} current size (``biggest bin load of the options") rather than its absolute value.  So it is not so surprising that the resulting distribution depends on the ranking of the bin, rather than its absolute size, as well.

The detailed analysis here shows that the number of balls in the $i^{th}$ lightest bin is $di^{d-1}(m/n^d)$ (plus a smaller error term) with high probability, for $m$ sufficiently large. The proof uses a phased analysis of the algorithm to show that after $m$ is sufficiently large, the behavior of the random allocation is predictable.

\begin{figure}
	\centering
	\includegraphics{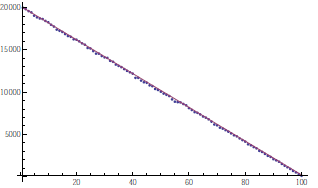}
	\caption{Experimental result for $n=100$, $m=10^6$, $d=2$}
	\label{fig:unbal2crop}
\end{figure}
\begin{figure}
	\centering
	\includegraphics{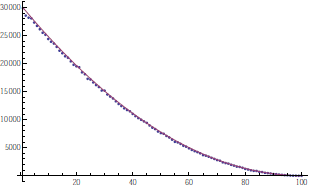}
	\caption{Experimental result for $n=100$, $m=10^6$, $d=3$}
	\label{fig:unbal3crop}
\end{figure}

\begin{figure}
	\centering
	\includegraphics{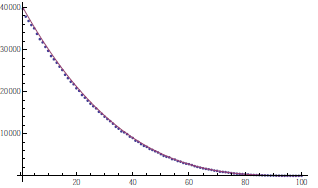}
	\caption{Experimental result for $n=100$, $m=10^6$, $d=4$}
	\label{fig:unbal4crop}
\end{figure}

Figures \ref{fig:unbal2crop}, \ref{fig:unbal3crop}, \ref{fig:unbal4crop} show experimental values for $d=2,3,4$.  The $y$-axis is bin loads, while the $x$-axis represents bins, sorted from most- to least-loaded. 
For ease of comparison, the predicted value is plotted as well. 
It is clear that the experimental data matches the theoretical bounds proved here.

\section{Definitions}
This paper concerns the allocation algorithm $\mathrm{UNFAIR}(m,n,d)$.  Informally, the algorithm places $m$ balls into $n$ bins by, at each time step, selecting $d$ options at random, and then placing the ball in the most-loaded bin.  In other words, it is a rich-get-richer process.  For clarity, the formal definitions and notation from \cite{me} are reiterated here.  

\begin{definition}
The unbalanced allocation process $\mathrm{UNFAIR}(m,n,d)$ is defined as follows. We use the standard notation of placing $m$ balls into $n$ bins using the power of $d$ choices.  At each time step, the process places a ball by selecting $d$ ``options" from the list of $n$ bins, with replacement; in other words, the process selects options uniformly at random from $\{1,\ldots n\}^d$. The option bins are all inspected.  The ball is placed in the currently \emph{most}-loaded option bin; in the case of a tie, the ball is placed uniformly at random in one of the most-loaded option bins.
\end{definition}

Initially, we think of bins as arbitrarily labeled $B_1, \ldots B_n$.  In analyzing the process, we reorder bins from least-loaded to most-loaded after each time step.  At time $t$, let $i(t)$ be the initial index of the $i^{th}$ lightest bin.  For ease of notation, let $B_{i(t)}$'s load at time $t$ be $b_{i(t)}$.  Break ties in this labeling randomly; for example, if $n=3$ and $b_{1}(t)=3$, $b_{2}(t)=2$, and $b_{3}(t)=3$, then $2(t)$ is equally likely to be 1 or 3.  We order bins $B_{1(t)}, B_{2(t)}, \ldots B_{n(t)}$, and have $b_{1(t)}(t)\leq b_{2(t)}(t)\leq \ldots \leq b_{n(t)}$.    

The implementation of $\mathrm{UNFAIR}$ under this labeling is the same as under the initial labeling.  An option set $(i_{1}, \ldots, i_{d})$ is chosen uniformly randomly from $[n]^{d}$ at time $t$. The bin $B_{i_{\mu}(t)}$ that gets the ball is such that $i_{\mu}=\max_{i_{j} \in S_{t} } \{ i_{j}\}$ and thus $b_{i_{\mu}(t)}(t)=\max_{i_{j} \in S_{t}} \{ b_{i_{j}(t)}(t) \}$.  (Note that ties are still broken uniformly randomly, as labels were applied to equally-loaded bins randomly.)

The reason for this reordering is that we now know that $i_{\mu}=\max \{i_{1}, \ldots, i_{d} \}$.   
In the initial labeling, it was equally likely that $b_{1}(t)>b_{2}(t)$ or $b_{2}(t)>b_{1}(t)$.  Now, it is always true that $b_{1(t)}(t) \leq b_{2(t)}(t)$.

\section{Results}

Because this is a rich-get-richer process, one would expect that, e.g., $b_{n(t)}(t)$ is significantly larger than $b_{1(t)}(t)$.  In other words, one would expect the bins to ``spread out" with time.  This is in fact the case, but the proof is complex: one has to deal with random tie-breaking, as well as quantify how much more likely a larger bin is to receive a ball.  

The main result is the following theorem, which gives an asymptotic value for the load of each bin under $\mathrm{UNFAIR}(m,n,d)$ for $m$ sufficiently large, with high probability.  In other words, we are able to quantify how much richer the rich get.  

\begin{theorem}\label{big}
After $m>n^{4d+13}$ balls have been placed under $\mathrm{UNFAIR}(m,n,d)$, the expected load $b_{i(m)}(m)$ in the $i^{th}$ smallest bin is $$\left((i/n)^d-((i-1)/n)^d\right)m+O\left(m/n^{d+1}\right).$$ Furthermore the probability that \emph{all} $n$ bins are within $O(m/n^{d+1})$ of their expectation is $1-O(n^{-d})$.
\end{theorem}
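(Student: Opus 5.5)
The plan has two phases: first show that the bins become strictly separated in load early on, then show that after separation each rank receives balls at an essentially fixed rate.

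The starting observation is the following. If at time $t$ all loads are distinct, so that there are no ties, the ball goes to the bin of rank $i$ exactly when the maximum of $d$ independent uniform indices from $[n]$ equals $i$. This happens with probability
\[
p_i=(i/n)^d-((i-1)/n)^d .
\]
So once the ranking is frozen, meaning no bin ever overtakes another, the load of rank $i$ grows as a sum of independent Bernoulli$(p_i)$ variables. The main term $p_i m$ in the theorem is exactly this rate. The strategy is therefore to show that after an initial phase of $m_0$ steps the ranking stops changing with high probability. The loads accumulated before $m_0$, and any tie-breaking effects, are then absorbed into the $O(m/n^{d+1})$ error.

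\textbf{Phase 1 (separation).} I would show that by time $m_0=n^{c}$, for some $c<4d+13$ chosen so that $m_0\le m/n^{d+1}$, consecutive bins differ in load by a gap growing linearly in time. The consecutive rates differ by $p_{i+1}-p_i\ge d(d-1)(i-1)^{d-2}/n^d\ge 1/n^d$ when $d\ge2$. Hence the gap $b_{(i+1)}-b_{(i)}$ behaves like a random walk with drift at least $n^{-d}$ per step and variance at most $O(i^{d-1}/n^d)$ per step. Only the steps that hit rank $i$ or $i+1$ move it, and these occur at rate $p_i+p_{i+1}$. While two bins are tied they swap labels randomly, but the pair still receives increments and the walk is reflected at $0$.

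A drift-plus-martingale argument (Azuma, or a gambler's-ruin bound for a biased walk) then shows the following. After time $T$ the gap is at least of order $T/n^d$, once $T\gg n^{d}\cdot(\text{variance}/\text{drift}^2)\cdot\log n$. After that the probability that the gap ever returns to $0$ is $\exp(-\Omega(\text{gap}\cdot\text{drift}/\text{variance}))$, which is tiny. A union bound over the $n-1$ adjacent pairs and over dyadic time scales gives failure probability $O(n^{-d})$. The generous exponent $4d+13$ leaves room for all the polynomial losses. This is the step I expect to be the main obstacle. The difficulty is that the gap process is not a clean Markov chain: its drift depends on the ranks holding the two bins, and near ties the identity of ``rank $i$'' keeps switching. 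I would handle this by coupling each adjacent-rank gap from below with a reflected walk whose drift is the worst case $n^{-d}$. I would also want a prior result, presumably from \cite{me}, that the ordering eventually stabilizes, to use as a black box where possible.

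\textbf{Phase 2 (steady growth and concentration).} Condition on the event that the ordering is fixed for all $t\ge m_0$. Then
\[
b_{i(m)}(m)=b_{i(m_0)}(m_0)+\mathrm{Bin}(m-m_0,p_i).
\]
The first term is at most $m_0\le m/n^{d+1}$. Chernoff's bound gives deviation at most $\sqrt{p_i m\cdot d\log n}$ with probability $1-n^{-d-1}$, and a union bound over all $n$ bins brings the total failure probability to $O(n^{-d})$. The requirement $m>n^{4d+13}$ ensures $\sqrt{m\log n}\ll m/n^{d+1}$, so this deviation also falls within the error term. Combining the expectations on the good event with the trivial bound $m$ on the bad event, which has probability $O(n^{-d})$, requires a little more care in the expectation statement. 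I would use the much smaller failure probability of Phase 1, where the exponential bounds give $n^{-\omega(1)}$, to make the bad-event contribution $O(m/n^{d+1})$.
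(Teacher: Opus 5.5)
Your two-phase architecture (separate the bins, then read off exact $\mathrm{Bin}(\cdot,p_i)$ counts from a frozen ranking) matches the paper's. The mechanism you propose for Phase~1, however, fails at exactly the point you flagged. A rank-indexed gap $g_i=b_{(i+1)}-b_{(i)}$ cannot be coupled from below with a reflected walk of drift $n^{-d}$, because in tie configurations its drift is actually negative. The reason is that adding a ball to a bin of load $\ell$ increments the \emph{highest} sorted position carrying load $\ell$. So if $b_{(i+1)}=b_{(i+2)}$ and $g_i>0$, a ball landing on rank $i+1$ just moves that bin to rank $i+2$. Then $g_i$ cannot increase in that step, yet it still decreases with probability at least $p_i$. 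Similarly, if ranks $1,\dots,i$ are tied and $g_i>0$, then $b_{(i)}$ rises with probability $(i/n)^d$, which exceeds $p_{i+1}$ whenever $(1+1/i)^d<2$. Such states are forced: right after the first ball, ranks $1,\dots,n-1$ are tied at $0$, and $g_{n-1}$ has drift $1-2(1-1/n)^d<0$ for large $n$. So the claim that every adjacent gap grows linearly is unsupported. Only the ``stay separated'' half survives for ranks. Once all adjacent gaps are simultaneously large, you can stop at the first time some gap hits $0$; before that there are no ties, and each $g_i$ is an honest $\pm1$ walk with rates $p_{i+1},p_i$. Getting all gaps large at the same time is what breaks.

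The repair is to track \emph{fixed} bins $B_a,B_b$ instead of ranks, as the paper does. Transposing $a\leftrightarrow b$ in the option vector is a measure-preserving bijection. It shows that if $b_a>b_b$ then $\Pr[B_a\text{ gets the ball}]-\Pr[B_b\text{ gets the ball}]\ge(2^d-2)/n^d$, regardless of ties with third bins; the surplus comes from option vectors in $\{a,b\}^d$ that use both labels. At $b_a=b_b$ the two bins are symmetric. Hence $|b_a-b_b|$ is a reflected walk with drift at least $(2^d-2)/n^d$ and per-step variance at most $2d/n$. Your drift-plus-gambler's-ruin argument then goes through for each of the $\binom n2$ pairs within time $n^{2d+O(1)}$, well below $n^{3d+12}$; for instance, use the stopped supermartingale $\lambda^{|b_a-b_b|}$ with $1-\lambda=\Theta(n^{1-d})$. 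Once all pairwise gaps stay positive there are no ties and the ranking is frozen. The paper uses this bias only qualitatively: it dominates each pair's gap by an \emph{unbiased} walk that reaches $n^2$ during a dedicated phase, and then imports the pairwise gambler's-ruin bound of Theorem~\ref{gambler} from \cite{me}. Finally, phrase Phase~2 as a coupling rather than a conditioning. Conditioning on the frozen-order event distorts the option-set distribution, which is why the paper needs its $\alpha$-perturbed binomials. By contrast, the number of steps in $(m_0,m]$ whose option vector has maximum index $i$ is exactly $\mathrm{Bin}(m-m_0,p_i)$ unconditionally, and it equals the increment of $b_{i(m_0)}$ on the good event. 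This gives both the concentration and the expectation bounds directly.
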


This theorem immediately leads to a corollary for indices linear in $n$.  In that case, the bins' loads take on a power-like distribution.  In particular, bins' loads are dependent on their relative ranking raised to the $d-1^{st}$ power.  The error bound proved here is $O(m/n^2)$; experimentally it seems this error is much smaller.
\begin{corollary}\label{power}
When $i=c n$ for constant $c$, after $m= \omega\left(n^{4d+13}\right)$ balls have been placed under $\mathrm{UNFAIR}(m,n,d)$, the expected load $b_{i(m)}(m)$ in the $i^{th}$ smallest bin is $$\left(dc^{d-1}\right)(m/n)+O(m/n^2).$$  Furthermore, the load is within $O(m/n^{d+1})$ of its expectation with probability $1-O(n^{-d})$.
\end{corollary}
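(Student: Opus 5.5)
The corollary should follow from Theorem~\ref{big} alone: I would substitute $i=cn$ into the main formula and expand it with the binomial theorem. No new probabilistic argument should be needed. The only extra care is in how the error terms combine, and in checking that the hypothesis $m=\omega(n^{4d+13})$ implies the hypothesis $m>n^{4d+13}$ of the theorem for all sufficiently large $n$.

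First I would compute the main term. Set $x=i/n=c$ and $h=1/n$. Then
\[
(i/n)^d-((i-1)/n)^d = c^d-(c-h)^d = \sum_{k=1}^{d}\binom{d}{k}(-1)^{k+1}c^{d-k}h^k .
\]
The $k=1$ term is $dc^{d-1}/n$. For fixed $c$ and $d$, the remaining terms sum to at most a constant times $1/n^2$. Multiplying by $m$ gives
\[
\left((i/n)^d-((i-1)/n)^d\right)m = dc^{d-1}(m/n)+O(m/n^2).
\]
The theorem also contributes an error of $O(m/n^{d+1})$. Since $d\ge 1$, this is $O(m/n^2)$ and is absorbed. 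This gives the expectation claim.

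The concentration statement is then read off directly from Theorem~\ref{big}. With probability $1-O(n^{-d})$, all $n$ bins are within $O(m/n^{d+1})$ of their expectations. In particular this holds for the bin of index $i=cn$, which gives exactly the stated bound.

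The only points needing care are bookkeeping ones.
\begin{itemize}
\item If $cn$ is not an integer, I would take $i=\lfloor cn\rfloor$. The extra error $\partial_x(x^d)\cdot O(1/n)\cdot(1/n)\cdot m=O(m/n^2)$ is still absorbed.
\item The constants hidden in the $O(\cdot)$ terms depend on $c$ and $d$, which are fixed.
\end{itemize}
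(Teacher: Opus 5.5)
Your proposal is correct and is the argument the paper intends. The paper gives no separate proof and says only that Theorem~\ref{big} ``immediately leads to'' the corollary; your substitution $i=cn$, the binomial expansion, the absorption of $O(m/n^{d+1})$ into $O(m/n^2)$, and reading off the concentration bound supply exactly that bookkeeping. One small slip in your floor remark: the quantity being shifted is the coefficient $dx^{d-1}$, so the relevant derivative is $\partial_x^2(x^d)$, not $\partial_x(x^d)$. It is still $O(1)$ for fixed $c,d$, so your $O(m/n^2)$ conclusion stands.
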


Although Corollary \ref{power} is proved for large $m$ and $i$, it seems to hold experimentally for much smaller values.  For example, when $n=100$ and $m=1000000$, we already see a strong relationship, as in Figures \ref{fig:unbal2crop}, \ref{fig:unbal3crop}, \ref{fig:unbal4crop}.  

The proof of Theorem \ref{big} is involved; we first give a sketch, then go into detail.

\subsection{Proof sketch}
The proof involves an ``initialization" phase from time $0$ to time $t$, during which we prove the bins' behavior stabilizes.  We then analyze the post-initialization phase, from time $t+1$ to $m$, during which the bins behave predictably.

Without loss of generality, focus on a pair of bins, say $B_{i(t)}$ and $B_{j(t)}$.  We will first condition on the event $X'_{i,j}$, which is the event that at some time $s$ between $0$ and $t$, $|b_{i(t)}(s)-b_{j(t)}(s)|\geq g$ for some ``load gap" $g$.  Note that this event depends only on the sequence of option sets from time $0$ to time $t$; by conditioning on the event $X'_{i,j}$ we are simply restricting ourselves to a subset of the sequences of option sets from time $0$ to $t$.  The intersection $X'=\cap_{1\leq i<j\leq n} X'_{i,j}$ is a further restriction.  The first part of the proof is to show that almost all option sequences lie in this intersection.

Conditioned on the event $X'$, a gambler's ruin argument shows that with probability $1-ne^{-\delta}$, $B_{i(t)}$ will be the $i^{th}$ least heavily-loaded bin at any time in the future. In particular, then, with probability greater than or equal to $1-ne^{-\delta}$, $B_{i(t)}$ will be the $i^{th}$ most heavily-loaded bin at every time step from $t$ to $m$.  Call this event $Y_i$ and their intersection $Y=\cap_{i=1}^{n} Y_i$.  Again notice that the relative position of any bin depends only the sequence of option sets from time $0$ to time $m$; by conditioning on the events $X'$ and $Y$, we are again restricting ourselves to a subset of the option sequences.  The second part of the proof is to show that $1-\alpha$ proportion of option sequences from $0$ to $m$ lie in $X'Y$, where $\alpha$ is a small error term.

Once we condition on $X'$ and $Y$, we know that $B_i(t)$ is always the $i^{th}$ most heavily loaded bin from time $t$ to $m$.  Under uniform selection of option sets from $\{1,2, \ldots n\}^d$, the probability of the $i^{th}$ most heavily loaded bin receiving a ball would be $(i/n)^d-((i-1)/n^d$.  However, since we are conditioning on $X$ and $Y$, option sets are no longer uniform.  Fortunately, we know we are at most $\alpha$ from uniform.  Therefore we can bound from below by coupling the number of balls placed in $B_{i(t)}$ from time $t$ to $m$ conditioned on $X$ and $Y$ with the binomial distribution $$B (m-t, (i/n)^d-((i-1)/n^d-\alpha)$$ and from above by coupling with the binomial distribution $$B \left(m-t, \left((i/n)^d-((i-1)/n^d)\right)\left(\frac{1}{1-\alpha}\right)\right).$$  Using the standard binomial bound $Pr(|B(n, p)-np|>a)<2e^{-2a^2/n}$, we can see that the final load $b_{i(t)}(m)$ is $$(m-t)\left(\left(\frac{i}{n}\right)^d-\left(\frac{i-1}{n}\right)^d-\alpha\right)-\sqrt{nm}<b_{i(t)}(m)$$ and $$b_{i(t)}<(m-t)\left(\left(\frac{i}{n}\right)^d-\left(\frac{i-1}{n}\right)^d\right)\left(\frac{1}{1-\alpha}\right)+t+\sqrt{nm}$$ with probability $$1-\alpha-2e^{-2nm/(m-t)}=1-O(\alpha+e^{-2n}).$$  We show that $\alpha=O(n^{-d-1})$ to complete the proof.
\subsection{Initialization Lemmas}
We now go into detail about the initialization phase.  We first show that pairs of bins achieve load gaps with high probability, then show that the load gaps lead to bin ordering stabilization.

\begin{lemma}\label{init}
After $n^{3d+12}$ balls have been placed, with probability $1-O(n^{-d-1})$, every pair of bins has achieved a load gap of at least $n^2$.  
\end{lemma}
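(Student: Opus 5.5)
Fix a pair of bins $B_a, B_b$ by their original labels and track the difference $D(s)=b_a(s)-b_b(s)$. The goal is to show that $|D(s)|\ge n^2$ at some $s\le n^{3d+12}$, except with probability $O(n^{-d-3})$. A union bound over the fewer than $n^2$ pairs then gives the claimed $1-O(n^{-d-1})$.

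The plan rests on two facts about the walk $D$.

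\emph{Laziness.} At each step $D$ changes by $+1$ if the ball goes to $B_a$, by $-1$ if it goes to $B_b$, and otherwise stays put.

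\emph{Uniform rate of movement.} In a single step the ball lands in $B_a$ with probability at least the chance that all $d$ options equal $a$, which is $n^{-d}$. The same bound holds for $B_b$.

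\emph{Symmetry or favorable drift.} If $D(s)=0$ the two bins are tied in load. By symmetry of the option distribution and uniform tie-breaking, $+1$ and $-1$ are then equally likely. If $D(s)>0$, say, then any option set containing $a$ but not $b$ sends the ball to $B_b$ only if $B_b$ is not among the options, so it never favors $b$. More precisely, by the reordered labeling the heavier bin has the higher rank. The probability that it is the maximum-ranked option in the set exceeds the corresponding probability for the lighter bin, because $(i/n)^d-((i-1)/n)^d$ is increasing in $i$. So, conditioned on moving, $D$ steps away from $0$ with probability at least $1/2$.

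With these facts, $|D|$ restricted to its moving times is a walk on the nonnegative integers that is at least as likely to go up as down. It therefore stochastically dominates reflected simple random walk, and a coupling step by step makes this precise.

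Next, count the number of moves. Set $T=n^{3d+12}$. The number of moves of $D$ by time $T$ dominates $\mathrm{Bin}(T,2n^{-d})$, which by Chernoff is at least $N=n^{2d+12}$ except with probability $e^{-\Omega(n^{2d+12})}$.

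A reflected simple random walk stays within $(-n^2,n^2)$ for $N$ steps with probability at most $(1-c)^{N/n^4}$. This is the standard fact that each block of $n^4$ steps escapes a width-$n^2$ interval with constant probability, by the central limit theorem or the optional stopping estimate $\mathbb{E}[\text{exit time}]=n^4$ together with Markov's inequality. Here $N/n^4=n^{2d+8}$, so this probability is $e^{-\Omega(n^{2d+8})}$, far smaller than $n^{-d-3}$.

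The main obstacle is the domination step, because the transition probabilities depend on the whole configuration, including the ranks of the other bins and random ties. I would handle this by conditioning on the full history up to time $s$. From that history, I would check two things:
\begin{itemize}
\item $P(\text{move})\ge 2n^{-d}$;
\item $P(\text{away from }0\mid\text{move})\ge 1/2$, using the monotonicity of rank-receiving probabilities. For equal loads this follows from exchangeability under random labeling; for strictly different loads the heavier bin has the strictly higher rank.
\end{itemize}
Once these hold, a standard coupling with an independent fair walk, using a uniform variable at each step, transfers the escape bound.
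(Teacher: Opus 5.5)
Your argument is correct and has the same skeleton as the paper's proof. Both proofs use a binomial lower bound on the number of moves of the pair's load difference. Both dominate $|D|$ by an unbiased walk, because the heavier bin is at least as likely to receive the ball. Both then apply a random-walk escape estimate and a union bound over pairs. (The first sentence of your drift paragraph is garbled, but the rank-monotonicity argument that follows it is sound.)

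It differs from the paper in two substantive ways.
\begin{itemize}
\item The paper splits the first $n^{3d+12}$ steps into $\binom{n}{2}$ disjoint phases of length $R=n^{3d+10}$. It studies pair $\{i,j\}$ only during its own phase, getting about $r=n^{2d+10}$ moves. You let every pair use the whole window, which is legitimate because the union bound needs no disjointness.
\item The paper invokes the single-run estimate that an $r$-step simple walk fails to reach distance $n^2$ with probability at most $12n^2/\sqrt{r}$. That is only polynomially small, namely $O(n^{-d-3})$, and it is what forces the large exponents. Your restart argument via the strong Markov property gives $e^{-\Omega(n^{2d+8})}$ per pair.
\end{itemize}
One small fix: Markov's inequality applied to $E[\tau]\le n^4$ gives escape probability at least $1/2$ only for blocks of length $2n^4$, not $n^4$.

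Your bookkeeping is also cleaner. The paper conditions on the event $X$ of at least $r$ moves and then couples with a uniform walk, without explaining why that conditioning preserves the step bias. You instead bound $P(\text{fewer than }N\text{ moves})+P(\text{embedded walk does not escape in }N\text{ steps})$ with no conditioning at all.

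As a by-product, your method shows the initialization could be much shorter. About $n^4\log n$ moves per pair, hence on the order of $n^{d+4}\log n$ balls, already give overall failure probability $O(n^{-d-1})$. This bears on the paper's remark that the $n^{4d+13}$ threshold might be lowered.
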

\begin{proof}
The initialization phase is actually a series of $\binom{n}{2}$ initialization phases, each of length $R=n^{3d+10}$.  The $i,j^{th}$ phase corresponds to the pair of bins ${B_i, B_j}$.  Without loss of generality, consider the $i,j^{th}$ phase $\phi_{i,j}$ focusing on bins $B_{i}$ and $B_{j}$.  Overall, there are $R$ balls placed in all bins during this phase.  

Let $X_{i,j}$ be the event that the number of balls placed in $B_{i}$ and $B_{j}$ combined is at least $r=n^{2d+10}$. Regardless of their loads, if the option set contains only bins $B_{i}$ and $B_{j}$, then one of them must receive a ball.  That happens with probability $(2/n)^d$.  So we may couple $\phi_{i,j}$ with $B(R, (2/n)^d))$, the binomial distribution of $R$ trials with success probability $(2/n)^d$, to get a lower bound on the number of balls placed in $B_{i}$ and $B_{j}$ during $\phi_{i,j}$.  A standard bound gives
\begin{align*}\label{E1k}
Pr(X_{i,j})&\geq 1-Pr(\beta \leq r)
\\ &\geq 1- 2\exp (-2(R(2/n)^d-r)^2/R))
\\ &\geq 1-O(e^{-n^{d+10}}).
\end{align*}
Let $X$ be the intersection of the $X_{i,j}$ over the entire initialization phase; in other words, $X$ is the event that every pair of bins $B_{i}$ and $B_{j}$ receives at least $r$ balls during its phase $\phi_{i,j}$.  Then the union bound gives
\begin{equation*}%\label{X}
Pr(X)\geq 1-\binom{n}{2}O(e^{-n^{d+10}})=1-O(n^2 e^{-n^{d+10}}).
\end{equation*}

We now condition on $X$ and for each $i,j$ consider the gap between $b_{i}$ and $b_{j}$ (i.e. the load gap of $B_{i}$ and $B_{j}$) during phase $\phi_{i,j}$.  We may model $|b_{i}-b_{j}|$ as a random walk.  Let $X'_{i,j}$ be the event that $|b_{i}-b_{j}|$ is $n^2$ at some time during $\phi_{i,j}$.  There is a bias towards the gap increasing, as a larger bin under this algorithm is more likely to receive a ball than a smaller bin.  Thus an $r$-step uniform random walk starting at $0$ coupled with $|b_{i}-b_{j}|$ would give a lower bound on this gap.  With this coupling, a standard bound from the theory of random walks (see e.g. \cite{feller}) gives
\begin{equation*}\label{X'}
Pr(X'_{i,j}|X) \geq 1-\frac{12n^2}{\sqrt{r}}=1-O(n^{-d-3}).
\end{equation*}

Let $X'$ be the intersection of the $X'_{i,j}$ over the entire initialization phase; in other words, $X$ is the event that every pair of bins $B_{i}$ and $B_{j}$ achieves a gap of at least $n^2$ during its phase.  Then the union bound gives
\begin{equation*}
Pr(X'|X)\geq 1-\binom{n}{2}\frac{12n^2}{\sqrt{r}}=1-O(n^{-d-1}).
\end{equation*}

Therefore we have the overall bound $$Pr(X') \geq Pr(X'|X)Pr(X)\geq (1-O(n^{-d-1}))(1-O(n^2e^{-n^{d+10}})\geq 1-O(n^{-d-1}),$$ as desired.
\end{proof}

We now show that  after the initialization phase the bin's relative positions are fixed with high probability from time $t$ to $m$: 

\begin{lemma}\label{fix}
For any $m>t=n^{3d+10}$, let $Y_i$ be the event that $i(t)=i(s)$ for all $t<s\leq m$ and $Y=\cap_{i=1}^{n} Y_i$.  Then $Pr(Y)\geq (1-O(n^{-d-1}))(1-O(n^2e^{-(d-1)n}))=1-O(n^{-d-1})$.
\end{lemma}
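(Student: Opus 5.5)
We will condition on the event $X'$ of Lemma \ref{init}, which holds with probability $1-O(n^{-d-1})$ and supplies the first factor in the bound. On $X'$, each pair of bins $B_i,B_j$ reached a load gap of at least $n^2$ at some time during its phase $\phi_{i,j}$. The plan is to show that, once such a gap exists, the gap is very unlikely ever to return to $0$. This applies both during the rest of the initialization and during the whole interval $(t,m]$. A union bound over the $\binom{n}{2}$ pairs then gives that the relative order of every pair is frozen after time $t$, which is exactly $Y$.

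\textbf{Biased walk.} Fix a pair whose current loads satisfy $b_{j(s)}(s)>b_{i(s)}(s)$, and let $D(s)=b_{j(s)}(s)-b_{i(s)}(s)$. We only look at time steps in which one of the two bins receives a ball. At such a step $D$ increases when the heavier bin gets the ball and decreases when the lighter one does. We claim that, conditional on one of the two bins receiving a ball, the heavier bin receives it with probability at least $p=\frac{d}{d+1}>1/2$, or at least some constant bounded away from $1/2$ for $d\ge 2$.

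To justify this, compare the option sets that award the lighter bin with those that award the heavier bin. Every option set whose maximum is the lighter bin can be mapped injectively to one whose maximum is the heavier bin, by replacing occurrences of the lighter index with the heavier index. This map is injective and misses some sets, for instance those containing both bins. A cleaner way to get the explicit ratio is to compute directly with the ranks $i<j$. The heavier bin wins with probability $(j/n)^d-((j-1)/n)^d$, the lighter with $(i/n)^d-((i-1)/n)^d$, and for $d\ge2$ the former strictly exceeds the latter, by at least a factor $(j/(j-1))^{d-1}$ when $i=j-1$. One subtlety is the $i=1$ edge, where ranks with equal loads are involved; we restrict to moments with $D>0$ so that tie-breaking is irrelevant.

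\textbf{Gambler's ruin and union bound.} By the gambler's ruin bound for a walk with up-probability $p$ and down-probability $q=1-p$, started at height $n^2$, the probability of ever hitting $0$ is at most $(q/p)^{n^2}$. Coupling $D$ with this walk (the real walk is at least as biased), the probability that the pair ever reverses is at most $(q/p)^{n^2}$. With $q/p$ bounded by roughly $e^{-(d-1)/n}$, which comes from the ratio $(1-1/n)^{d-1}$ in the worst case of adjacent ranks near the top, this gives at most $e^{-(d-1)n}$. A union bound over the $\binom{n}{2}$ pairs gives $\Pr(Y\mid X')\ge 1-O(n^2e^{-(d-1)n})$. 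Multiplying by $\Pr(X')$ yields the stated bound.

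\textbf{Main obstacle.} The difficult step is the bias estimate. For adjacent ranks near the top, the advantage of the heavier bin is only a factor of about $1+(d-1)/n$. This is why the gap $n^2$ is needed: it makes $(1-\Theta(1/n))^{n^2}=e^{-\Theta(n)}$. A further issue is that the ranks of the two bins relative to the other bins may change while we follow the pair, so the conditional probability varies over time. We handle this by using the worst-case ratio over all possible rank configurations, namely two adjacent ranks $k-1,k$. We also use the fact that the gap is created during phase $\phi_{i,j}$ and must persist until time $t$; the same ruin bound covers that interval as well, since the walk argument applies from the first moment the gap reaches $n^2$ onward.
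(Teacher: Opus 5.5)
Your proposal is correct and follows the paper's proof: condition on $X'$ from Lemma~\ref{init}, bound the chance that a pair which reached gap $n^2$ ever returns to gap $0$ by $e^{-(d-1)n}$, union bound over the $\binom{n}{2}$ pairs, and multiply by $\Pr(X')$. The one difference is that the paper cites Theorem~\ref{gambler} for the ruin bound, while you re-derive it, validly, from the adjacent-rank estimate $q/p\le(1-1/n)^{d-1}\le e^{-(d-1)/n}$ and a coupling with a homogeneous biased walk. Delete your opening claim that the heavier bin wins with probability at least $d/(d+1)$, or some constant above $1/2$: it is false for adjacent ranks near the top, where the odds are only $1+\Theta(1/n)$, and it contradicts the bias you actually use afterwards.
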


This lemma follows swiftly from Lemma \ref{init} combined with a theorem from \cite{me}:

\begin{theorem}[Theorem 4.3 from \cite{me}]\label{gambler}
	For any starting configuration of bins and balls, and for any $\delta$, if bin $B_{j}$ has a load of $ \delta n /(d-1)$ more balls than bin $B_{i}$'s load, then the probability of bin $B_{j}$'s load becoming smaller than bin $B_{i}$'s load under $\mathrm{UNFAIR}$ at any time in the future (i.e. after any number $m$ of balls has been added) is at most $e ^{-\delta}$.
\end{theorem}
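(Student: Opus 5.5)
The plan is to track the gap $G(s)=b_j(s)-b_i(s)$ as a biased random walk and finish with a gambler's-ruin estimate, carried out via an exponential supermartingale so that the configuration-dependent transition probabilities cause no trouble. $G$ changes only when the ball lands in $B_i$ or $B_j$. It increases by one when $B_j$ gets the ball and decreases by one when $B_i$ does. So it suffices to show that, as long as $G>0$, at every step
$$\frac{\Pr(B_j \text{ receives the ball})}{\Pr(B_i\text{ receives the ball})}\ \geq\ \rho:=\left(\frac{n}{n-1}\right)^{d-1}\ \geq\ e^{(d-1)/n},$$
where the last inequality is $-\ln(1-1/n)\geq 1/n$.

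\textbf{Step 1: the one-step bias.} In the sorted labeling, the bin of rank $k$ receives the ball with probability $p_k=(k/n)^d-((k-1)/n)^d$, ignoring ties for the moment. Since $G>0$, $B_j$ has strictly larger rank than $B_i$. The sequence $p_k$ is increasing, so it is enough to compare consecutive ranks $a$ and $a+1$ with $a+1\le n$. Write $(a+1)^d-a^d=\int_a^{a+1} d x^{d-1}\,dx$ and $a^d-(a-1)^d=\int_{a}^{a+1} d (x-1)^{d-1}\,dx$. The ratio of the integrands is $(x/(x-1))^{d-1}\ge ((a+1)/a)^{d-1}\ge (n/(n-1))^{d-1}$, because $x/(x-1)$ is decreasing in $x$ and $a+1\le n$. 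This gives $p_{a+1}/p_a\ge\rho$.

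For ties: $B_i$ and $B_j$ are never tied with each other while $G>0$, but each may be tied with other bins. Uniform tie-breaking, equivalently the random labeling of equal bins, gives each bin in a tied block exactly the average of the $p_k$ over the block's ranks. Every rank in $B_j$'s block exceeds every rank in $B_i$'s block, so the block averages inherit the ratio bound. Making this averaging argument precise is the step I expect to need the most care.

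\textbf{Step 2: supermartingale and optional stopping.} Let $u_s$ and $v_s$ be the conditional probabilities that $B_j$ and $B_i$ receive ball $s+1$, so $u_s\ge\rho v_s$. Then
$$\mathbb{E}[\rho^{-G(s+1)}-\rho^{-G(s)}\mid\mathcal F_s]=\rho^{-G(s)}\left(u_s(\rho^{-1}-1)+v_s(\rho-1)\right)=\rho^{-G(s)}(\rho-1)(v_s-u_s/\rho)\le 0.$$
Hence $M_s=\rho^{-G(s\wedge\tau)}$ is a nonnegative supermartingale, where $\tau$ is the first time $G$ hits $0$. Hitting $0$ is necessary before $B_j$ can become smaller than $B_i$, since $G$ moves by $\pm1$. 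For any horizon $m$, optional stopping at $\tau\wedge m$ gives $\Pr(\tau\le m)\cdot 1\le \mathbb{E}[M_{\tau\wedge m}]\le M_0=\rho^{-G(0)}$. Letting $m\to\infty$ yields
$$\Pr(\tau<\infty)\le \rho^{-\delta n/(d-1)}\le e^{-\frac{d-1}{n}\cdot\frac{\delta n}{d-1}}=e^{-\delta}.$$
This bound holds for every starting configuration, since only the inequality of Step 1 was used, and that inequality holds in every configuration with $G>0$.
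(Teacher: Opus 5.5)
Your argument is correct and is essentially the gambler's-ruin argument the paper relies on; the paper does not reprove this statement but imports it from \cite{me}. The bias $p_{k+1}\ge (n/(n-1))^{d-1}p_k$ together with the supermartingale $\rho^{-G}$ gives exactly the bound $e^{-\delta}$ for a gap of $\delta n/(d-1)$. The tie step you flagged is routine: a tied block occupying ranks $a+1,\dots,a+\ell$ receives the ball with probability $\sum_{k=a+1}^{a+\ell}p_k$ (the probability that the heaviest option lies in the block), and its bins are exchangeable under the symmetric tie-breaking rule, so each gets an equal share.
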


\begin{proof}
We know from Lemma \ref{init} that the probability of $X'$, i.e. all bin pairs achieving a load gap of $n^2$ during the initialization phase, is at least $1-O(n^{-d-1})$.  We know from Theorem \ref{gambler} that if two bins achieve gap $n^2$ then the probability of swapping relative positions at any future time is $e^{-(d-1)n}$.  So in particular $$Pr(Y_i|X')\geq 1-ne^{-(d-1)n}.$$  The union bound gives $$Pr(Y|X')\geq 1-O(n^2e^{-(d-1)n}).$$ Putting this together with previous results gives $$Pr(Y) \geq Pr(YX')=Pr(Y|X')Pr(X')$$ $$Pr(Y)\geq (1-O(n^2e^{-(d-1)n}))(1-O(n^{-d-1}))\geq 1-O(n^{-d-1}),$$ as desired.
\end{proof}

\subsection{Proof of Theorem \ref{big}}

With these lemmas in hand, we are ready to prove Theorem \ref{big}.

\begin{proof}
Once we condition on $Y$, it seems relatively easy to estimate the load of $B_{i(t)}$: from time $t$ to $m$ $B_{i(t)}$ is always the $i^{th}$ most heavily loaded.  For a uniformly selected option set from $\{1,2, \ldots n\}^d$ the probability of bin $B_{i(t)}$ being the heaviest chosen is $(i/n)^d-((i-1)/n)^d$.  In other words, the number of balls added from time $t$ to $m$ should behave as a binomial distribution with $m-t$ trials and success probability $(i/n)^d-((i-1)/n)^d$.  

However, since we are conditioning on $Y$, we may no longer assume option sets are uniformly distributed in $\{1,2, \ldots n\}^d$.  Fortunately we know from Lemma \ref{fix} that we have removed at most $\alpha=O(n^{-d-1})$ of the option sets.  Therefore we may couple the number of balls placed in $B_{i(t)}$ from time $t+1$ to $m$ with binomial distributions to bound it from above and below: $$B(m-t, \left(\frac{i}{n}\right)^d-\left( \frac{i-1}{n}\right)^d-\alpha) \leq b_{i(m)}\leq t+B\left(m-t, \left(\left(\frac{i}{n}\right)^d-\left(\frac{i-1}{n}\right)^d\right)\left(\frac{1}{1-\alpha}\right)\right).$$  

Observe that, with $m=\Omega(n^{4d+13})$, $t=n^{3d+12}$, $\alpha=O(n^{-d-1})$, the expected values of the lower and upper bounds are both $$\left((i/n)^d-((i-1)/n)^d\right)m+O\left(m/n^{d+1}\right)$$ as desired. All that remains is to use the standard binomial bound $$Pr(|B(n,p)-np|>a)<2e^{-2a^2/n}$$ with $a=\sqrt{nm}$ to see that the probability of $b_{i(m)}$ being within $O(m/n^{d+1})$ of the binomial expectation is $1-O(\alpha+e^{-2n})=1-O(n^{-d-1})$.  Using the union bound one final time tells us the probability that \emph{all} bins $B_i$ have load $\left((i/n)^d-((i-1)/n)^d\right)m+O\left(m/n^{d+1}\right)$ is $1-O(n^{-d})$.

\end{proof}

\section{Conclusions and Future Work}

This paper and \cite{me} introduce a new allocation algorithm, UNFAIR.  Together they show that UNFAIR generates a new type of power-law distribution, where the load of each bins scales with the power of its relative position.

It has been shown that many naturally-occurring structures can be modeled by preferential attachment (e.g. \cite{BA}).  It would be of interest to discover if UNFAIR is a valid model for some family of naturally-occurring structures. Another direction of inquiry is the relation between UNFAIR and other mathematical process.  For example, Achlioptas random graph processes (see e.g. \cite{nogiant, giant, lutz}), in which a random graph is built by deciding between multiple ``option" edges to add at each stage, may have connections to this.

Another obvious direction of future work is to improve the threshold for convergence.  From experimental data, it seems possible that the distribution converges well below the $n^{4d+13}$ limit proved here.  New ideas for the initialization phase could improve this.

\begin{acknowledgements}
Thanks to Chen Avin, Zvi Lotker, and Eli Upfal for many useful discussions.  Particular thanks to Chen Avin for experimental data and the initial impetus to prove a power-law property.  Thanks also to anonymous reviewers for many helpful comments.
\end{acknowledgements}

\nocite{*}
\bibliographystyle{abbrvnat}

\bibliography{bibliography.bib}
\label{sec:biblio}

\end{document}